\newtheorem{theorem}{Theorem}
\newtheorem*{theorem*}{Theorem}
\newtheorem{Lemma}[theorem]{Lemma}
\newtheorem{corollary}[theorem]{Corollary}
\newtheorem{definition}[theorem]{Definition}
\newtheorem{example}[theorem]{Example}
\begin{document}
\title{Matrix Product States approach to non-Markovian processes} 
\author{Beno\^it Descamps}%

\email{benoit.descamps@univie.ac.at}
\affiliation{Faculty of Physics, University of Vienna, Austria}%

\date{\today}

\begin{abstract}
A matrix product states approach to non-Markovian, classical and quantum processes is discussed. In  the classical case, the Radon-Nikodym derivative of all processes can be embedded into quantum measurement procedure. In the both cases, quantum and classical, the master equation can be derived from a projecting a quantum Markovian process onto a lower dimensional subspace.
\end{abstract}

\maketitle

\section{Introduction and summary}
Stochastic processes play a major role in any fields of science. Whether it is physics, biology, chemistry, finance, etc..., the effective dynamics of  macro-variables are derived from a couplings to some random observables. 

Consider a process $(X_j)_{j\in\mathbb{N}}$ of which $X_1$, $X_2$, ..., $X_n$ could represent observations sampled randomly
from the same population at time points i = 1, 2, .... When constructing models some underlying assumption of the mechanics has to be made.
The most famous example are the  Autoregressive-Moving-Average model $ARMA(p,q)$ processes ,
$$X_k = c+\sum_{j_1=1}^p \alpha_{j_1} X_{k-j_1}+\sum_{j_2=1}^q \theta_{j_2}W_{k-j_2}$$
with $W_k$ are independently normally distributed  $N(0,1)$. Notice that the transition probabilities only depend on finite variables. Hence,
these processes provide a useful method to describe the statistics of short-memory processes. However, as it is well known, there exists processes with long memory. Some of those have been described by Beno\^it Mandelbrot and are known as fractional Brownian motion.

Clearly, for all these processes an analytically approachable form had to be chosen in order to model the statistical properties desired.
In the end, given some data $(X_j)_{j\in\mathbb{N}}$, we wish to find the model with the right correlations and higher momenta. These are fully described by the characteristic function of the process,
$$f(u_1,u_2...)=E(\exp(iu_1 X_1+iu_2 X_2+...))$$
Therefore the underlying mechanics of should not be described on the level of the observations, but rather on the level of the characteristic function or the measure of the process.

In \cite{sMPSTemme},  Stochastic Matrix product States (sMPS) were introduced as ansatz for studying non-equilibrium states in statistical mechanics.
As shown further on, similarly to Matrix Product States (MPS) being fixed points of the Density Renormalization Group, sMPS should be seen as fixed points of Metropolis Monte-Carlo Sampling. Hence, these states represent the joint distribution of a statistical (non-Markovian) process.

The paper is ordered as follows. In the first section, we present an introduction to Matrix Product States (MPS), their continuum version (cMPS), Quantum Measurements, and Radon-Nikodym derivatives. In the second section, we show the connection with the so-called sMPS and how it can be used to described the non-Markovian processes and their Master equations with some examples.
In the last, section some miscellaneous applications can be found

\subsection{Matrix Product States and their Continuum versions}
Matrix Product States form a class of finitely correlated states used to study the low energy spectrum of local Hamiltonians in quantum spin chain.
For a chain $\otimes_{j=1}^N\mathbb{C}^d$ of size $N$, they are given as follows,

$$|\psi \{A^{(j)}[k]\}\rangle =\sum_{i_1,...,i_N}\operatorname{Tr}\left(A^{(i_1)}[1]...A^{(i_N)}[N] \right)(\sigma^+[1])^{i_1}...(\sigma^+[N])^{i_N} |0...0\rangle=\operatorname{Tr}_V(Z_N)|0...0\rangle$$
with matrix $A^{(j)}[k]\in \mathcal{M}_D$ of bound dimension $D$. The partial trace in the last term is in the virtual space. The following normalization condition can be chosen,
$$\sum_{j}A^j[k]A^j[k]^\dagger =\mathbb{1}$$
For the case $d=2$, the operator $Z_N$ can be rewritten as a solution of the recursion equation,
$$Z_N=Z_{N-1}+ Z_{N-1} \left(A_0[N]-\mathbb{1}\right)+Z_{N-1}A^1[N]\otimes \sigma^+$$
Using notation $\Delta n =1$, $\Delta Z_n = Z_{n}-Z_{n-1}$, $\Delta B^\dagger_n =\frac{1}{\sqrt{\Delta n}}\sigma^+[n]$, this equation is equivalent to,
$$Z_N=\mathbb{1}+\sum_{j=1}^N \Delta Z_n,~~~\Delta Z_n = Z_{N-1} Q[n]\Delta n+Z_{N-1}R[n]\otimes \Delta B^\dagger_n,~ Q[n]=\frac{\mathbb{1}-A_0[N]}{\Delta_n},~~R[n]=\sqrt{\Delta_n}A^1[n]$$
This is a discrete quantum stochastic differential equation (QSDE). Clearly by identifying $\Delta n \leftrightarrow dt$ and $\Delta B^\dagger[n]\leftrightarrow dB^\dagger(t) $, the continuum version of matrix product states can be
Its continuum version is ,
$$|\psi \{Q[x], R[x]\}\rangle =\operatorname{Tr}\left(\exp\left(\int_0^L Q(x) ds + R(x)\otimes dB^\dagger(x)\right)\right) |\Omega\rangle =\operatorname{Tr}_V(Z_N)|0...0\rangle$$
which satisfies again the QSDE,
$$Z_t=\mathbb{1}+\int_0^t  dZ_s,~~~dZ_t = Z_t Q[t]dt+Z_{t}R[t]\otimes dB^\dagger[t]$$
A wider variety of solutions of QSDE have been studied by Hudson and Parthasarathy \cite{HUDSON}. We do not continue on this topic.
\subsection{Quantum Measurement Theory}
Denote $(Y,\mathcal{B}(Y))$  a Borel space representing the output events of some quantum measurement. Define a random variable $X:Y\to \mathbb{R}$ which represent the value of the outputs. Measurements are described by operators $A^{(y)}$ labeled by $y\in Y$. It is then postulated in quantum statistics that the probability of an event in $E\in \mathcal{B}(Y)$ is given,
$$P(X(y), y\in E)=\int_{y\in E}\operatorname{Tr}\left(\rho A^{(y)}A^{(y)\dagger}\right)$$
The conditional state $\rho$ resulting from the measurement is then given by,
$$\rho_E =\int_{y\in E}\frac{A^{(y)\dagger}\rho A^{(y)}}{\operatorname{Tr}\left(\rho A^{(y)}A^{(y)\dagger}\right)}$$

From this we can consider a sequence of measurement at different times $t_1,\dots,t_N$ with output values given by $X_{t_1},\dots,X_{t_n}$. The probability of a certain chain of outcomes $x_1,\dots,x_N$ at times $t_1,\dots,t_N$ is therefore given by, 
\begin{align*}
P(X_{t_1}=x_1,\dots,X_{t_n}=x_n)=&P(X_{t_n}=x_n|X_{t_{n-1}}=x_{t_{n-1}},\dots,X_{t_{1}}=x_{t_{1}})\dots\\
& P(X_{t_2}=x_2|X_{t_{1}}=x_{t_{1}})P(X_{t_1}=x_{t_1})\\
&=\langle\rho| \left(A^{(x_1)}\otimes \overline{A}^{(x_1)}\right)\dots \left(A^{(x_N)}\otimes \overline{A}^{(x_N)}\right)|I\rangle
\end{align*}
From another perspective, this measurement process can be seen as the evolution of a state in a cavity interacting with an electromagnetic field. The outputs are the values given by the detector in contact with the field. An introduction of to this topic can be found in \cite{HANDEL}.
In \cite{HOLEVO}, Holevo studied the representation of continuous measurement. Clearly not any arbitrary operator can be considered when considering a continuum limit.
In order to derive the representation, Holevo studied the characteristic function of the joint probability distribution of the sequence of output.
The idea is that similarly to infinitely divisible processes in statistics, also known as Levy processes, to look at the characteristic function.
\begin{align*}
\sum_{x_1,...,x_N}\exp(i\lambda(x_1+...+x_n))P(X_{t_1}=x_1,\dots,X_{t_n}=x_n)=\operatorname{Tr}\left(\rho\phi(\lambda)^n[\mathbb{1}]\right)=\operatorname{Tr}\left(\rho\exp\left(n\mathcal{L}(\lambda)\right)[\mathbb{1}]\right)\\
\phi(\lambda)[.]= \sum_{x}\exp(i\lambda x)A^{(y)}[.]A^{(y)\dagger},~~\mathcal{L}(\lambda)[.]=\operatorname{Id}-\frac{1}{n}\left(\Phi(n)-n\operatorname{Id}\right)[.]
\end{align*}
Clearly the limit $n\to \infty$ does not always exists, however when it does, the resulting generator $\mathcal{L}(\lambda)$ is given by a non-commutative version of the Levy-Khintchin representation theorem. One set of representations, we are interested in are of the form,
$$\mathcal{L}(\lambda)=\mathcal{L}_0[.]+\mathcal{L}_1(\lambda)[.] $$
with,
\begin{equation}
\label{Lindblad}
\mathcal{L}_0[.]=Q[.]+[.]Q^\dagger+\sum_j R_j[.]R^{\dagger}_j,~~Q=iH+\frac{1}{2}\sum_j R_jR_j^{\dagger}
\end{equation}
and,
$$\mathcal{L}_1(\lambda)[.]=im\lambda[.] +\sigma^2\left(R[.]R^\dagger-\frac{1}{2}\{RR^\dagger,.\}+ i\lambda (R[.]+[.]R^\dagger)-\frac{1}{2}\lambda^2[.]\right)$$
Consider the following two examples,
\begin{example}
For,
$$dZ^{(1)}(t)=Z^{(1)}(t)\left(\mathcal{L}_0[.]dt+ \sigma\frac{R[.]+[.]R^\dagger}{2} dB^x_t-i\sigma\frac{R[.]-[.]R^\dagger}{2}dB^y_t +\frac{1}{2}\sigma m(dB^x_t+idB^y_t)\right)$$
with solution
 \begin{align*}
 Z^{(1)}_t=&\exp\left(t\mathcal{L}_0[.]dt -\frac{1}{2}\sigma^ t\left( \left(\frac{R[.]+[.]R^\dagger}{2}+m\right)^2 -\left(\frac{R[.]-[.]R^\dagger}{2}-m\right)^2 \right)+\right.\\
 &\left. \sigma\frac{R[.]+[.]R^\dagger}{2} B^x_t-i\sigma\frac{R[.]-[.]R^\dagger}{2}B^y_t +\sigma m(B^x_t+iB^y_t)\right)
 \end{align*}
 
 where $(B^x_t,B^y_t)$ is a two-dimensional Brownian motion 
 We can verify using Ito-calculus,
\begin{align*}
E(\operatorname{Tr}_V(\rho Z^{(1)}_t)\exp(i\lambda (B^x_t+i B^y_t)))=\operatorname{Tr}\left(\rho \exp(t\mathcal{L}_0+t\mathcal{L}_1(\lambda))[\mathbb{1}]\right),\\
\end{align*}
\end{example}
We find a similar construction using 1-dimensional Brownian motion,
\begin{example}
For,
 \begin{equation}
 \label{mainSDE}
 dZ^{(2)}_t=Z^{(2)}_t\left(\mathcal{L}_0[.]dt +\sigma\left(R[.]+[.]R^\dagger\right) dB_t+ m dB_t\right)
 \end{equation}
 
 $$Z^{(2)}_t=\exp\left(t\mathcal{L}_0[.]t-\frac{1}{2}\sigma^2t\left(R[.]+[.]R^\dagger+m\right)^2+ \sigma\left(R[.]+[.]R^\dagger+m\right)B_t \right)$$
 where $(B^x_t,B^y_t)$ is a two-dimensional Brownian motion 
 We can verify using Ito-calculus,
\begin{align*}
E(\operatorname{Tr}_V(\rho Z^{(2)}_t)\exp(i\lambda B_t))=\operatorname{Tr}\left(\rho \exp(t\mathcal{L}_0+t\mathcal{L}_1(\lambda))[\mathbb{1}]\right),\\
\end{align*}
\end{example}

As we see in the next section, stochastic matrix product states represent as explained here a sequential measurement of a cavity coupled to an electromagnetic field. The examples provided here are the continuum limits of such states.

\subsection{Radon-Nikodym derivatives and Girsanov's theorem}
Let $(\Omega,\mathcal{F},\mathbb{P})$ be a probability space, let $\tilde{\mathbb{P}}$ be another probability measure on $(\Omega,\mathcal{F})$, that is equivalent to $\tilde{\mathcal{P}}$. Then there exist  an almost surely positive random variable $Z$ be which satisfies $\mathbb{E} Z=1$ and for $A\in \mathcal{F}$,
$$\tilde{\mathbb{P}}(A)=\int_A Z(\omega)d\mathbb{P}(\omega)$$

The statement described above is the Radon-Nikodym theorem. The random variable $Z$, called the Radon-Nikodym derivative, and allow us to change a measure.
Suppose we have a filtration $\mathcal{F}(t)$, define for $0\leq t\leq T$, where T is a fixed final time. Then we can define the so-called Radon-Nikodym derivative process,
$$Z(t)=\mathbb{E}[Z|\mathcal{F}(t)]$$
The main application of the change of measure, is that it allows to map a Brownian motion onto another one with a drift. This is procedure is known under the name Girsanov's theorem.
\begin{theorem}[Girsanov]
Let $B(t)$, $0\leq t \leq T$, be a Brownian motion on a probability space $(\Omega,\mathcal{F},\mathbb{P})$, let $\mathcal{F}(t)$, be the filtration generated by this Brownian motion. Let $\Theta(t)$, $0\leq t \leq T$, be an adapted process, define,
\begin{align*}
Z(t)&=\exp\left(-\int_0^t \theta(u)dB(u)-\frac{1}{2}\int_0^t \Theta^2(u)du\right)\\
\tilde{B}(t)=B(t)+\int_0^t \Theta(u)du
\end{align*}
Assume that $E\int_0^T \Theta^2(u)Z^2(u)du<\infty$. Let  $\tilde{\mathbb{P}}$ be the probability measure generated by the Radon-Nikodm derivative $Z(T)$. The the process $\tilde{B}(t)$ is a Brownian motion under $\tilde{\mathbb{P}}$
\end{theorem}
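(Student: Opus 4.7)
The plan is to apply L\'evy's characterization of Brownian motion: any continuous process $\tilde{B}(t)$ which is a martingale under $\tilde{\mathbb{P}}$ with quadratic variation $\langle \tilde{B}\rangle_t = t$ must be a $\tilde{\mathbb{P}}$-Brownian motion. Continuity is immediate since $\tilde{B}(t)=B(t)+\int_0^t\Theta(u)du$ is the sum of a continuous path and a continuous finite-variation integral. The quadratic variation is also easy: the drift term $\int_0^t\Theta(u)du$ has finite variation and contributes nothing, so $\langle\tilde{B}\rangle_t=\langle B\rangle_t=t$. Thus the real content is the martingale property.

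First, I would verify that $Z(t)$ itself is a $\mathbb{P}$-martingale, not merely a local martingale. Applying It\^o's formula to the exponential gives $dZ(t)=-\Theta(t)Z(t)\,dB(t)$, so $Z$ is a nonnegative local martingale. The assumption $E\int_0^T\Theta^2(u)Z^2(u)du<\infty$ is precisely what is needed to conclude that the stochastic integral $\int_0^t\Theta(u)Z(u)\,dB(u)$ is a square-integrable martingale, hence $Z$ is a true martingale with $EZ(t)=1$. This legitimises the definition of $\tilde{\mathbb{P}}$ as a probability measure via $d\tilde{\mathbb{P}}/d\mathbb{P}=Z(T)$, and gives the Radon-Nikodym process $Z(t)=E[Z(T)\mid\mathcal{F}(t)]$.

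Next, the key computational step: show that $Z(t)\tilde{B}(t)$ is a $\mathbb{P}$-martingale. Using $dZ=-\Theta Z\,dB$, $d\tilde{B}=dB+\Theta\,dt$, and the It\^o product rule,
\begin{align*}
d\bigl(Z(t)\tilde{B}(t)\bigr)&=Z\,d\tilde{B}+\tilde{B}\,dZ+d\langle Z,\tilde{B}\rangle\\
&=Z\,dB+Z\Theta\,dt-\tilde{B}\Theta Z\,dB-\Theta Z\,dt\\
&=\bigl(Z-\tilde{B}\Theta Z\bigr)\,dB,
\end{align*}
so the drift cancels exactly and $Z\tilde{B}$ is a local $\mathbb{P}$-martingale; under the same integrability hypothesis (combined with a localisation argument if needed) it is a true martingale. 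Then the abstract Bayes rule for changing measures,
$$\tilde{E}\bigl[\tilde{B}(t)\mid\mathcal{F}(s)\bigr]=\frac{E\bigl[Z(t)\tilde{B}(t)\mid\mathcal{F}(s)\bigr]}{Z(s)}=\frac{Z(s)\tilde{B}(s)}{Z(s)}=\tilde{B}(s),$$
yields the $\tilde{\mathbb{P}}$-martingale property of $\tilde{B}$. Combined with continuity and quadratic variation $t$, L\'evy's theorem delivers the conclusion.

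The main obstacle, and the only genuinely subtle point, is the passage from local martingale to true martingale for both $Z$ and $Z\tilde{B}$; without this, the change of measure is not well defined and the Bayes identity above fails. The stated integrability condition is exactly tailored to handle this step (it is weaker than Novikov but does the job here because both the $Z$ equation and the $Z\tilde{B}$ equation have integrands controlled by $\Theta Z$). Everything else (It\^o calculus, quadratic variation, and L\'evy's characterization) is bookkeeping.
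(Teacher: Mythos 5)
The paper does not prove this theorem: it is recalled as a classical result (essentially Theorem 5.2.3 of the cited Shreve volume) and used as a black box later on, so there is no in-paper argument to compare against. Your proposal is the standard textbook proof and it is correct: It\^o's formula gives $dZ=-\Theta Z\,dB$, the hypothesis $E\int_0^T\Theta^2Z^2\,du<\infty$ upgrades $Z$ from local to true martingale so that $\tilde{\mathbb{P}}$ is well defined, the product-rule computation showing the drift of $Z\tilde{B}$ cancels is right, and the Bayes identity plus L\'evy's characterization finish the job. The one point worth tightening is your treatment of $Z\tilde{B}$: the stated integrability condition controls the integrand $\Theta Z$ of the $Z$ equation but not directly the integrand $Z(1-\tilde{B}\Theta)$ of the $Z\tilde{B}$ equation, so you cannot claim square-integrability "under the same hypothesis." The clean fix, which you already gesture at, is that L\'evy's characterization only requires $\tilde{B}$ to be a continuous \emph{local} martingale under $\tilde{\mathbb{P}}$, and the correspondence between $\tilde{\mathbb{P}}$-local martingales $M$ and $\mathbb{P}$-local martingales $ZM$ holds by localization once $Z$ is a true positive martingale; so no further integrability of $Z\tilde{B}$ is actually needed.
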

As we will see further on, continuous stochastic matrix product states allow us to map a Wiener process onto another, not necessarily Gaussian, process with correlated increment
\section{Stochastic Matrix Product states}
We finally introduce Stochastic Matrix Product States (sMPS) and a few of its continuum counterparts (csMPS).

Stochastic Product States were first introduced in (REF) as possible tool for studying  the partition function of spin models in statistical mechanics.
The stochastic product state was of the form,
\begin{equation}
\label{sMPSTemme}
|p_D\rangle =\sum_{i_1,...,i_N}\langle L|B_{(i_1)}^1...B_{(i_N)}^N|R\rangle|i_1...i_N\rangle
\end{equation}
with the matrices $B_{i_k}$ and vectors $\langle L|$, $\langle R|$ element-wise positive. Additionally the transfer matrices of this form $T[k]=\sum_j B_{(j)}[k]$ is a stochastic matrix.

We argue that this form is incomplete. In the last paragraph of this section, we show that for these type of states can be mapped onto a Markovian process. On the other hand, the construction proposed breaks down for the more general form, we give below.

Similarly to Matrix Product Operators (MPO), we derive Stochastic Matrix Product States from a purification method.

Consider a process $(X_j)_{j=1}^N$, $X_j\in \{y_1,...,y_d|y_j\in\mathbb{R}\}$, with joint probability $P(X_1=x_1,\dots,P_N=x_N)=p(x_1,\dots,x_n)$
Define the pure state $|\psi\rangle$,
$$|\psi\rangle =\sum_{i_1,\dots,i_N}\sqrt{p(x_1,\dots,x_N)}|i_1,\dots\rangle_A |i_1,\dots\rangle_B $$
The MPS-representation $\{B^{(i_j)}\}$ of this state can be derived. By tracing out the ancillary $B$, we derive the form,
\begin{equation}
\label{sMPSQM}
P(X_1=x_1,\dots,X_N=x_N)=C^N\sum_{i_1,\dots,i_N}\langle L|B_{(i_1)}^1 \circ ...\circ B_{(i_N)}^N|R\rangle \delta(x_1,i_1)\dots\delta(x_N,i_N)
\end{equation}
where $B_{(i_k)}^1$ is completely positive.
For some normalization constant $C$. In this case for the boundary operators, $|L\rangle =\hat{L} \otimes \mathbb{1}|I\rangle$, $|R\rangle =\hat{R}\otimes \mathbb{1}|I\rangle$, it should be taken that $\hat{L},\hat{R}\geq 0$. In contrast to the previous form the transfer matrix is a trace preserving completely positive operator $\Gamma[k][.]=\sum_j B_{(i_k)}^k [.]$. 

\paragraph{continuous-time sMPS}
Setting the boundaries first aside see that this form is exactly equivalent to the Quantum Measurements described earlier. We can make this more explicit with the gauge transformation $A^{(i)}\to \tilde{A}^{(i)}=X^{-1/2}\tilde{A}^{(i)} X^{1/2}$, $\rho \to \tilde{\rho}=X^{1/2}\rho X^{1/2}$.

Additionally, when rewriting,

$$P(X_1=x_1,\dots,P_N=x_N)=Z_N(x_1,...,x_N) \left(\frac{1}{d}\right)^N$$

We see that the sMPS functions behave as a change of measure.
Let us keep this in mind and rewrite $Z_N$ as a solution of a discrete stochastic differential equation for the case $d=2$ and $X_j=\pm1$. So $Z_N=\mathbb{1}+\sum_{j=1}^N \Delta Z_n$ and,
$$\Delta Z_n = Z_{n-1} \frac{\left(\Gamma[n]-\mathbb{1}\right)}{\Delta n}\Delta n+Z_{n-1}\sqrt{\Delta_n}\left(A^{(-1)}[n]\otimes \overline{A}^{(-1)}[n]-A^{(1)}[n]\otimes \overline{A}^{(1)}[n]\right) \Delta B_n$$
where $\Delta B_n=\delta(x_n,+1)-\delta(x_n,-1)$.  By taking the renormalization,
 $$A^{(-1)}=\mathbb{1}+\Delta_n Q+\sqrt{\Delta_n}R,~~ A^{(+1)}=\mathbb{1}+\Delta_n Q-\sqrt{\Delta_n}R$$
with $Q=iH+\frac{1}{2}R^\dagger R$, and taking limit $\Delta_n \to 0$, this leads to the form,
\begin{equation}
\label{csMPS}
Z(t)=Z(t)\mathcal{L}_0[.]dt +Z(t)\left(R[.]+[.]R^\dagger\right) dB_t
\end{equation}
derived in the Quantum Measurement section. Hence, the Radon-Nikodym satisfying is a continu version the sMPS.

Clearly other time-continuous versions are possible by considering a different product measure. Let $X_j\in\{0,1\}$ and $p(X_j=0)=\exp(-\lambda \epsilon)$, $p(X_j=1)=\exp(-\lambda \epsilon)\epsilon \lambda$.
Then let us write,
$$\Delta Z_n = Z_{n-1} \frac{A^{(0)}[n]\otimes \overline{A}^{(0)}[n]-\mathbb{1}}{\epsilon}\Delta n+Z_{n-1}\left(A^{(1)}[n]\otimes \overline{A}^{(1)}[n]-\mathbb{1}\right) \Delta N(n)$$
Here, we chose $\Delta n=\epsilon\delta(x_n=0)$and $\Delta N(n)=\delta(x_j=1)$. By taking $A^{(0)}=\mathbb{1}+i\epsilon H-\epsilon \frac{1}{2}\mu $ and $A^{(1)}=\sqrt{\mu} U$, for some unitary $U$, $U^\dagger U=\mathbb{1}$ and $\mu>0$.
Computing the generator $\mathcal{L}(\lambda)$ of the characteristic function yield,
$$\mathcal{L}(\lambda)=i[H,.]+\mu\left(\exp(i\lambda)U[.]U^\dagger-[.]\right)$$
In the bound dimension one case, this is nothing but a Poisson process. For higher dimension, this corresponds to a photon-counting process from the quantum measurement point of view.
We do not continue on this.

\paragraph{sMPS as fixed points of Metropolis sampling}

Let us rephrase the original Metropolis Monte Carlo Algorithm in a quantum measurement based language in a some state manifold $\mathcal{M}_{\mbox{{\scriptsize states}}}$. This allows us a larger variety of input states, besides product state, and operations, thus extending the markovian character of the algorithm to non-markovianity. 

The whole procedure can be resumed as follows,
starting from some input state $\rho (0)$
\begin{enumerate}
\item at step $t$, pick a (local) operation $O_j[.]$
\item accept the measurement with probability $\operatorname{Tr}\left(O_j^*[\mathbb{1}]\rho (t-1)\right)$
\item if accepted, change the state to 
$$\rho(t-1)\to \rho(t)  =\min\limits_{\sigma \in \mathcal{M}_{\mbox{{\scriptsize states}}}}\left\|\sigma -\frac{O_j[\rho (t-1) ]}{\operatorname{Tr}\left(O_j^*[\mathbb{1}]\rho (t-1)\right) }\right\|_2$$
if necessary, project the state back onto the manifold.
\item go back to step 1
\end{enumerate}
Given some observable $M$, for large enough time $T$, the expectation $E(M)(T)$, 
$$E(M)(T)= \frac{1}{T} \sum_{t=1}^T \operatorname{Tr} \left(|\psi (t)\rangle \langle \psi(t)| M \right)$$
converges to the equilibrium expectation, $\lim_{T\to \infty} E(M)(T)=\operatorname{Tr} \left(\rho_{\infty} M \right)$, with,
$$\rho_{\infty}=\sum_i\lim_{T\to \infty} \operatorname{Tr}\left(\rho_0 \Gamma^T [A^{i}A^{i\dagger}]\right)|i\rangle \langle i|,~~\Gamma[.]=\sum_j A_j [.] A_j^\dagger $$

As an example, consider a classical 1-dimensional spin chain, described by the Hamiltonian,
$$H=\sum_j s_i s_{i+1}$$
Choose $\mathcal{M}_{\mbox{{\scriptsize states}}}$ first to be the manifold of product states, 
$$\mathcal{M}_{\mbox{{\scriptsize states}}}=\left\{|s_1,...,s_N\rangle | s_i=\pm 1 \right\}$$
By taking the measurements $O(s_j)=\frac{\exp(\beta H) |s_j\rangle \langle -s_j| \exp(-\beta H)|}{\sum_{s_j}\exp(\beta H)}$, and defining the operations,
and operations, 
$$\tilde{O}(s_j)[.]=\sum_s\langle s_{j-1},s_j,s_{j+1}|O|s_{j-1},s_j,s_{j+1}\rangle \langle s_{j-1},s_j,s_{j+1}| [.] |s_{j-1},s_j,s_{j+1}\rangle | s_{j-1},s_j,s_{j+1}\rangle \langle s_{j-1},s_j,s_{j+1}|$$
the algorithm describes precisely a Glauber dynamics \cite{Metropolis}, which converges to the equilibrium state,
$$\rho_{\beta}=\sum_{s_1,...,s_N}\langle L| B(s_1)...B(s_N)|R\rangle |s_1,...,s_N\rangle\langle s_1,...,s_N|,~~B(s_j)=\left(\begin{array}{cc}
\exp(2\beta s_j)& 0\\
0 & \exp(-2\beta s_j)
\end{array}\right)$$

Let us now extend $\mathcal{M}_{\mbox{{\scriptsize states}}}$, to the mpo-manifold of bound dimension 2,
$$\mathcal{M}_{\mbox{{\scriptsize states}}}=\left\{\rho \left\{M[s_j]\right\}\rangle  | \operatorname{dim}M[s_j]=2 \right\},~~\rho\left\{M[s_j]\right\}=\sum_{s_1,...,s_N}\langle L| M(s_1)...M(s_N)|R\rangle |s_1,...,s_N\rangle\langle s_1,...,s_N|$$

Clearly $\rho_{\beta}$ is a fixed point, in the sense that,
$$\rho_{\infty}=\sum_j\operatorname{Tr}\left(\rho_\beta \Gamma^T [O^{j*}[\mathbb{1}]]\right)|i\rangle \langle i|,~~\forall T$$
\paragraph{Difference between the representations}
Equation (\ref{sMPSTemme}) can be mapped to (\ref{sMPSQM}), by taking, $$A^{(i)}_{y,z}[k]=\sqrt{B^{(i)}_{y,z}}[k]|y\rangle\langle z|$$
Even more, in the following construction we show that we can find a Markov process that is equivalent to the ones described by (\ref{sMPSTemme}) on the level of the MPS.
As noted before the transfer matrix of (\ref{sMPSTemme}) is a stochastic matrix. Consequently, this corresponds to the completely positive operator $\Gamma[]=\sum_{xy}T_{x,y}|x\rangle\langle y|$ which is the transfer matrix in (\ref{sMPSQM}).
Consider first a Markov process $(Y_j)$ with transfer matrix $T$. The joint probability distribution is easily written in smps-notation by taking $A^{(x_j)}[k]=T|j\rangle \langle j|$.
Take now the process $(X_j)$ with smps of the form  (\ref{sMPSTemme}). Consider its purification to MPS. Next block the sites over a length $L$ so that the number matrices $M^{(i_1,\dots,i_L)}=A^{(i_1)}\dots A^{(i_L)}$ exceeds the Kraus-rank of the completely positive operator (CP). Next apply a change of basis in the virtual. As it is known, for any CP-map Kraus-operators are equivalent by some unitary $U$, 
$$\Gamma[.]=\sum_j A^j[.]A^{j\dagger}=\sum_j B^j[.]B^{j\dagger}\leftrightarrow A^j=\sum_i U_{ji}B_j$$
Since we have blocked to full rank, and since the transfer matrix over this length is of the form $\Gamma[L][.]=\sum_{x,y}(T^L)_{x,y}|x\rangle\langle y|.|y\rangle\langle x|$. Take the unitary so that $M^{(i_1,\dots,i_L)}\to \lambda(x) \tilde{A}^{(x)}=(T^L)|x\rangle\langle x|$. Consider the equivalent class $[x]=\{[i_1,...,i_L]|M^{(i_1,\dots,i_L)}\propto  \tilde{A}^{(x)}\}$. The new process $Y_j=\{[x]\}$ is thus a Markovian process over a rescaled time.

In some sense the MPS describes the time evolution of a cavity couple an electromagnetic field. At each time, some detection procedure is applied, for example Homodyne detection, fixing the real basis of the MPS, i.e. photon basis at each time slice of the e.m.-field. The change of processes is therefore nothing but a change to another commutant of the *-algebra of observables.

\subsection{Master Equation for memory processes}
Let $(X_j)_{j\geq 0}$, be a process whose joint probability distribution is described by $\mathbb{P}$.
A topic of broad interest is the time evolution the marginal discribution of $X_N$. For any distribution, this time evolution of $P(X_N=x_N)$ with initial condition $P(X_0=x_0)$ is given by,
\begin{equation}
\label{memory}
P(X_N=x_N)=\sum_{x_1,...,x_{N-1}} P(X_N=x_n|X_{N-1}=x_{n-1},...,X_0=x_0)...P(X_1=x_1|x_0)
\end{equation}
For a Markov process with transition matrix $P(X_j=x_j|X_{j-1}=x_{j-1})=T_{x_j,x_{j-1}}$, this evolution can be reduced severely to the different forms,
$$P(X_N=x_N)=\sum_{x_{N-1}}T_{x_N,x_{N-1}}P(X_{N-1}=x_{N-1})$$
For $T=\mathbb{1}+\epsilon G$, with $G$ understood as a generator of a time-continuous stochastic matrix, we can rewrite in a differential form,
$$\frac{dP(k)}{dt}=\sum_l \delta(k,l)\frac{P(X_N=k)-P(X_{N-1}=l)}{\epsilon}=\sum_{l\not=k}\left[G_{k,l}P(X_{N-1}=l)-G_{l,k}P(X_{N-1}=k)\right]$$
This is known as the Master equation.

As we can see, the number of parameters of equation (\ref{memory}) scales exponential as we try to find the time-evolution of a marginal $P(X_N=x_N)$. In a first approach, as it is done in practice, we the transition probability can be bounded up to some fixed time $k$. However, similarly to the Markovian case, this can be written in a simple sMPS form of bound dimension $2k$.

\begin{example}
Let $k<\infty$ and $\forall n$, 
\begin{align*}
P(X_n=i_n|X_{n-1}=i_{n_1},&\dots,X_{n-k}=i_{n-k},X_{n-k-1}=x_{n-k-1},\dots,X_{}=x_{1},X_0=x_0)\\
&=P(X_n=i_n|X_{n-1}=j_{n_1},\dots,X_{n-k}=x_k)=T_{i_{n-k},...,i_{n-1},i_n}
\end{align*}
Then, the sMPS representation is given by,
$$A^{i_n}_{y_1,\dots,y_{k},z_1,\dots,z_{k}}=T_{y_1,\dots,y_k,z_{k}}\delta_{i_n=z_k}\delta_{y_{2}=z_1}\dots \delta_{y_{k}=z_{k-1}}$$
and with  boundaries, 
$$|X\rangle =|I\rangle,~~\langle \rho|_x=P(X_0=x)$$
\end{example}
As understand from the point of view of Quantum Measurement theory, the sMPS-formalism, describes smoother conditional probability. From the condensed matter point of view, similarly to matrix product states. Such states describe processes whose almost, but not quite joint probability factorizes and vice-versa \cite{BRANDAO},
$$\|p(X_1,...,X_k, X_{k+N},...,X_{k+l+N})-p(X_{k+N},...,X_{k+l+N})\|_1 \leq C_2 \exp(-C_1 N )$$

Thus, we can write for a compationally efficient parametrzation for such memory processes.
So given a process $(X_j)_{j\geq 0}$, described by the smps, 
$$P(X_1=x_1,\dots,X_N=x_N)=\operatorname{Tr}\left(\rho \hat{Z}_N{x_1,...,x_N}\right)$$
with $Z_0=\mathbb{1}$ and,
$$\Delta Z_N(x_1,...,x_N) =\sum_j \hat{Z}_{n-1}(x_1,...,x_{N-1})\left(A^{j}\otimes \overline{A}^j-\mathbb{1}\right) \delta(j,x_N)$$
Then,
$$P(X_N=k)=\operatorname{Tr}\left(\sum_{l}\hat{T}_{k,l}\hat{P}_l\rho\right) $$
for which,
$$\sum_l\hat{T}_{k,l}\hat{P}_l=\sum_l(\overline{A}^{(k)}\overline\otimes {A}^{(k)})(\overline{A}^{(l)}\overline\otimes {A}^{(l)})\Gamma^{*N-2}$$
In the time-continuous limit this corresponds to,
$$\frac{d}{dt}P(X_t=k)=\operatorname{Tr}\left(\sum_{l}\hat{G}_{k,l}\hat{P}_l\rho\right) $$
and again,
$$\sum_l\hat{G}_{k,l}\hat{P}_l=\sum_l S^k\exp(tL)$$
with $S^{(k)}[.]=Q^{(k)}[.]+[.]Q^{(k)\dagger}$,  $S^{(k)}[.]=R^{(k)\dagger}[.]R^{(k)}$ or sum of both
\begin{example}[Non-Markovian Birth-Death processes]
Let $X_j\in \mathbb{N}$, and 
$$Q^{(n)}=-\frac{1}{2}\hat{G}_{n,n}\otimes|n\rangle \langle n|,~~R_{+1}^{(n)}=\hat{G}_{n,n+1}|n\rangle \langle n+1|,~~R_{-1}^{(n)}=\hat{G}_{n,n-1}|n\rangle \langle n-1|$$
with $0=\hat{G}_{n,n}+\hat{G}_{n,n}^\dagger +\hat{G}_{n+1,n}\hat{G}_{n+1,n}^\dagger+\hat{G}_{n+1,n} \hat{G}_{n+1,n}^\dagger$.

Let $S^{(n)}$ be,
$$S^{(n)}=Q^{(n)}[.]+[.]Q^{(n)\dagger}+  R_{+1}^{(n)\dagger}[.]R_{+1}^{(n)}+ R_{-1}^{(n)\dagger}[.]R_{-1}^{(n)}$$

For choice of boundary $\rho=\sum_n \lambda_n |n\rangle \rangle n|$ and $\operatorname{dim}(\hat{G}_{n,m})=1$, this reduces to the birth-and-death process.
\end{example}

\subsection{Description of Non-Markovian Quantum Dynamics}
Consider some two level system, representing the two lowest energy states of a larger n-level system weakly couple to an infinite environment. For low temperatures and weak-coupling, the density matrix $\rho(t)o$ evolves under some Markovian dynamics, $\Gamma_t =\exp(tL)$,
$$\rho_0 \to \rho(t)=\exp(tL)[\rho_0]$$
Let us now increase the temperature $T>0$, but keep a weak coupling with environment, while observing the evolution of the two level system as being part of the initial Gibbs-state $\rho_T$. This two level system, is a part of some subspace of the evolving state $\rho_T(t)$. 

Define the super-operator $M_{ij}[.]:\mathcal{M}_n\to \mathcal{M}_n$.

\begin{definition}
\label{defMij}
Let $M_{ij}[.]:\mathcal{M}_n\to \mathcal{M}_n$, so that
so that $\forall x_{i}\overline{x}_j$, $\sum_{i,j}x_i \overline{x}_j M_{ij}[.]$ is completely positive and $\sum_j M_{jj}[\mathbb{1}]=\mathbb{1}$. 
\end{definition}

Notice then that $\forall \sigma \in \mathbb{M}_n$, the matrix $\rho=\operatorname{Tr}\left(\sigma M_{ij}[\mathbb{1}]\right)|i\rangle \langle j|$ is a density matrix. On the other hand, any projection of a higher-level system to a lower level subsystem, has to satisfy these condition. Therefore, the definition (\ref{defMij}), should be seen as the necessary and sufficient condition for any projector of a density matrix onto a lower dimensional density.

With this operator, we can thus extract the density matrix of the two-level system at each time t,
\begin{equation}
\label{MasterQ}
\rho_T(0)\to \rho(t)=\operatorname{Tr}\left(\exp(t L)[\rho_T] M_{ij}[\mathbb{1}] \right)|i\rangle \langle j|
\end{equation}

We can see that the canonical form of $M_{ij}[.]$ is of the form,

$$M_{ij}[.]=\sum_k  A^{i,k}[.]A^{j,k \dagger},~~\sum_{j,k}  A^{j,k}[\mathbb{1}]A^{j,k \dagger}=\mathbb{1}$$

Indeed, the idea is similar to the Choi-Jamiolkowski Isomorphism for deriving the Kraus-representation of completely positve operators.
Define the matrix,
$$=\sum_{i,j,\alpha,\beta} |i\rangle \langle j|\otimes M_{ij}[|\alpha\rangle\langle \beta|]\otimes |\alpha\rangle\langle \beta|$$
By definition of $M_{ij}$, $\mathcal{C}(M)\geq 0$, whence it can be decomposed as,
\begin{align*}
\mathcal{C}(M)&=\sum_k |X_k\rangle \langle X_k|=\sum_k |i\rangle \langle j|\otimes X_{i,k}|I\rangle \langle I| X_{j,k}^\dagger
\end{align*}
where we have written, $|I\rangle =\sum_j |jj\rangle$.
From this the proof follows.

We see that starting from a quantum Markovian dynamics, the effective evolution of the two-level system given in equation (\ref{MasterQ}) is non-Markovian.

Let us derive this evolution from a Matrix Product State approach. Once this is done, we can conclude using the area law result that any non-Markovian dynamics, for which, 
$$\|\rho(X_1,...,X_k, X_{k+N},...,X_{k+l+N})-\rho(X_{k+N},...,X_{k+l+N})\|_2 \leq C_2 \exp(-C_1 N )$$
can be efficiently approximated using equation using MPO and the time-continuous dynamics is given by equation (\ref{MasterQ}).

Denote $\rho(X_1,\dots, X_N,\dots)$ the joint density matrix of process describing the evolution of some sub-level system, in the course of our example a 2-level system.
Consider the purification of the process, 
$$|\psi\rangle =\sum_{i_1,\dots, i_N,\dots}\sqrt{\rho(X_1,\dots, X_N,\dots)}|i_1,\dots, i_N,\dots\rangle |i_1,\dots, i_N,\dots\rangle$$
Again, considering the MPS-representation of the state, and tracing out the ancillary, we acquire the Matrix Product Operator representation of the process,
$$\rho(X_1,\dots, X_N,\dots)=\sum_{i_1,j_1,\dots, i_N,j_N,\dots}\operatorname{Tr}\left(\rho M_{i_1,j_1}[.]\circ \dots \circ  M_{i_N,j_N}[.]\circ \dots \mathbb{1} \right)$$
with,
$$M_{i_k,j_k}[.]=\sum_{l}A^{il}[.]A^{il\dagger}$$
This yields us the evolution of the marginals,
Then,
$$\rho(N)=\sum_{i,j}\operatorname{Tr}\left(\hat{\Gamma}[\hat{\rho}(N-1)_{ij}]\rho\right)|i\rangle\langle j| $$
for which,
$$\hat{\Gamma}[\rho(N-1)]=M_{ij}^*\circ \Gamma^{*N-1})$$
In the time-continuous limit this corresponds to,
$$\frac{d}{dt}\rho(t)=\operatorname{Tr}\left(\hat{G}[\hat{\rho}(t)_{ij}]\rho\right)|i\rangle\langle j|$$
and again,
$$\hat{G}[\rho(t)]=S_{ij}^*\circ \exp(tL)$$

with $S^{(ij)}[.]=Q^{(i)}[.]+[.]Q^{(j)\dagger}$ or $S^{(ij)}[.]=R^{(i)\dagger}[.]R^{(j)}$, and,
$$L[.]=\sum_j Q^{(j)}[.]+[.]Q^{(j)}+R^{(j)}[.]R^{(j)},~~\sum_{j} Q^{(j)}+Q^{(j)\dagger}+R^{(j)\dagger}R^{(j)}=0$$

\section{Miscellaneous}

\subsection*{A Complete Market with Finitely Correlated Increments of the Logarithm of the Stocks}
As we can see, from the form of the Radon-Nikodym derivatives described in this paper, there is for all increments some correlation with future and past increments. One of the goals of Financial Mathematics is to study the pricing of contracts so that both clients and agents cannot take advantage of each other. This theory is also known as arbitrage-pricing theory. Interestingly, our formalism allow us to define a market with a certain bias in the evolution of stocks. Yet, we see that among these markets, fairness, in the sense of "no-arbitrage", can still be found.

Idea, consider the projector onto the identity. With a 
\subsubsection{A Fast Introduction to Neutral-Pricing}
This introduction is meant as a presentation of the ideas of arbitrage pricing and contains many holes. For a full introduction we refer to \cite{SHREVEI}, \cite{SHREVEII}. 
Let $(\Omega,\mathcal{F},\mathbb{P})$ be a probability and $B_t$ a brownian motion. Denote $\mathcal{F}(t)$ the filtration wrt to the Brownian motion. 
In the most simple market model to different actions are possible at each time $t$. We can either purchase some stock $S(t)$ or invest at some interest rate $r(t)$ in the money market. An option is a contract with a payoff at some later time $T$ that depends on the stock $S(t)$ at different values $0<t\leq T$. 
The idea of arbitrage pricing approach is to find a portofolio $X(t)$ that replicate the options by only using the two actions described above.

Assume the stock whose price at time $t$ is given by $S(t)$ satisfies the stochastic differential equation,
\begin{equation}
\label{Stock}
dS(t)=\alpha S(t)dt+\sigma S(t)dB_t
\end{equation}

Define the discount process $D(t)$,
$$D(t)=\exp\left(-\int_0^t r(s)ds\right)$$
The randomness of the stock is described by some measure $\mathbb{P}$. However, as we will see there is a nice trick that allow us to find the replicating strategy by changing to a new measure, called the risk-neutral measure. 

\begin{definition}
A probability measure is said to be risk-neutral if,
\begin{enumerate}[(i)]
\item $\mathbb{P}$ and $\tilde{\mathbb{P}}$ are equivalent, and
\item under $\tilde{\mathbb{P}}$, the discounted stock price $D(t)S(t)$ is a martingale
\end{enumerate}
\end{definition}
The existence of the risk neutral measure plays a central in risk-neutral pricing as it allows to find a strategy to hedge any derivative security.

Denote $X(t)$ the value of the portofolio. The idea is to hedge the derivative security by investing at each time in $\Delta(t)$ stocks $S(t)$ and the difference in the money market with interest rate $r$.
The differential of $X(t)$ is therefore given by,
$$dX(t)=\Delta(t)dS(t)+r(X(t)-\Delta(t) S(t))dt=rX(t)dt+\Delta(t)d(D(t)S(t))$$
We see then that the differential of the discounted portofolio,
$$d(D(t)X(t))=\Delta(t)d(D(t)S(t))$$
The martingale representation theorem allows us the write any martingale defined on a filtration of a Brownian motion as a stochastic integral. In the case of $D(t)S(t)$, we have $D(t)S(t)=\int_0^t \sigma D(s)S(s)dW_s$.
The first fundamental theorem of asset pricing asserts that if a market model has a risk-neutral-pricing formula, then it does not admit arbitrage. Arbitrage is a trading strategy that begins with nothing, has probability of losing money, and a strictly positive probability of making money. 

Let $V(T)$ be the payoff of the option at time $T$. The payoffs at a time $t<T$ are called derivative securities.
Therefore, we see that our model
So given some contract given by the derivative security $V(T)$. Then since $D(T)S(T)$ is a martingale under $\tilde{\mathbb{P}}$,then so is $D(t)V(t)$.
We can then use the martingale representation theorem to write,
$$D(t)V(t)=V(0)+\int_0^t \Gamma dB_s$$
Our portofolio will replicate the price of the derivative security, if $\Delta(t)=\frac{\Gamma(t)}{\sigma D(t)S(t)}$ and $X(0)=V(0)$.
Such a market wherein every derivative security can be hedged is called a complete market.

\subsubsection{The Correlated Model}

Clearly the form of the SDE (\ref{mainSDE}) of the stochastic matrix product state reminds a lot of geometric brownian motion. So why not define a model, where the stocks $S(t)$ are described by such SDE.
According to the arbitrage pricing, our market model is then complete if $D(t)S(t)$ is a martingale under some measure $\tilde{\mathbb{P}}$ whose Radon-Nikodym derivative is also of the form (\ref{mainSDE}). Unfortunately, it turns out that we cannot consider both $S(t)$ and  $\tilde{\mathbb{P}}$ of the form (\ref{mainSDE}) at the same time.
We discuss both cases separately and derive the condition for the completeness of the market.

\subsubsection{Case 1: Change in Evolution Stock}

The idea of model is then to redefine the time evolution of the stock $S(t)=\operatorname{Tr}\left(\rho \hat{S}(t) X\right)$, with,
\begin{equation}
\label{Stock}
 d\hat{S}(t)=\hat{S}(t)\left(\mathcal{L}_0[.]dt +\alpha dt +\sigma\left(R[.]+[.]R^\dagger\right) dB_t \right)
\end{equation}
The discounted process $D_t S_t$ will have a drift terms $\alpha-r$. This drift term can be taken care of using Girsanov's theorem.

\begin{theorem}
Let,
\begin{equation}
\label{martingalecond1}
\mathcal{L}_0[X]dt +(\alpha -r) \left(R[X]+[X]R^\dagger\right)=0
\end{equation}
with $\mathcal{L}_0[.]$ of the form (\ref{Lindblad}).
Let $S(t)$ be a process satisfying equation (\ref{Stock}). For the Radon-Nikodym derivative $Z_T$,
$$Z(T)=\exp\left(-\frac{\alpha -r }{\sigma}B(T)-\frac{1}{2}\left(\frac{\alpha -r }{\sigma}\right)^2T\right)$$
Let $\tilde{\mathbb{P}}$ be the measure generated by $Z_T$ is a risk-neutral measure. Then if,
$$E\left(Z_T\operatorname{Tr}\left(\rho \hat{S}(t)\left(R[X]+[X]R^\dagger \right)\right)^2\right)<\infty$$
 $\tilde{\mathbb{P}}$ is the risk-neutral measure
\end{theorem}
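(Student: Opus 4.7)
The plan is to verify the two defining properties of a risk-neutral measure in turn: equivalence with $\mathbb{P}$, and the $\tilde{\mathbb{P}}$-martingale property of $D(t)S(t)$. Equivalence is nearly automatic: the process $\Theta(t)\equiv(\alpha-r)/\sigma$ is constant, so Novikov's condition is trivially satisfied, $Z(t)$ is an honest $\mathbb{P}$-martingale with $\mathbb{E} Z(T)=1$, and $Z(T)>0$ a.s., giving $\tilde{\mathbb{P}}\sim\mathbb{P}$. Girsanov's theorem then certifies that
\[
\tilde{B}(t)\;=\;B(t)+\frac{\alpha-r}{\sigma}\,t
\]
is a Brownian motion under $\tilde{\mathbb{P}}$.

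Next I would compute $d(D(t)S(t))$. Since $S(t)=\operatorname{Tr}(\rho\,\hat{S}(t)[X])$, Ito's formula applied to the operator SDE (\ref{Stock}) gives
\[
dS(t)=\operatorname{Tr}\!\bigl(\rho\,\hat{S}(t)[\mathcal{L}_0[X]]\bigr)\,dt+\alpha\,S(t)\,dt+\sigma\,\operatorname{Tr}\!\bigl(\rho\,\hat{S}(t)[R[X]+[X]R^\dagger]\bigr)\,dB(t).
\]
Multiplying by $D(t)$ and using $dD(t)=-rD(t)\,dt$, the product rule yields
\[
d\bigl(D(t)S(t)\bigr)=D(t)\,\operatorname{Tr}\!\bigl(\rho\,\hat{S}(t)[\mathcal{L}_0[X]+(\alpha-r)X]\bigr)\,dt+\sigma D(t)\,\operatorname{Tr}\!\bigl(\rho\,\hat{S}(t)[R[X]+[X]R^\dagger]\bigr)\,dB(t).
\]
Substituting $dB(t)=d\tilde{B}(t)-\tfrac{\alpha-r}{\sigma}\,dt$ collects the $dt$-terms into the operator expression on which condition (\ref{martingalecond1}) bites, so the drift vanishes identically as an operator-valued coefficient, leaving only the $d\tilde{B}(t)$-part. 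Hence $D(t)S(t)$ is a $\tilde{\mathbb{P}}$-local martingale.

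Finally, I would promote local to true martingale. Abstract Bayes converts the hypothesis $\mathbb{E}\bigl[Z_T\,\operatorname{Tr}(\rho\,\hat{S}(t)(R[X]+[X]R^\dagger))^2\bigr]<\infty$ into a $\tilde{\mathbb{P}}$-bound on the expected quadratic variation $\tilde{\mathbb{E}}\int_0^T \sigma^2 D(t)^2 \operatorname{Tr}(\rho\,\hat{S}(t)[R[X]+[X]R^\dagger])^2\,dt<\infty$, which is exactly what is needed for $D(t)S(t)\in L^2(\tilde{\mathbb{P}})$ and therefore for a genuine martingale. Combined with equivalence this establishes risk-neutrality.

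The main obstacle I foresee is the operator-algebraic bookkeeping in the Ito step: one has to justify that the drift cancellation really holds at the level of the superoperator applied to $X$ (so that condition (\ref{martingalecond1}) suffices regardless of the reference state $\rho$), and one needs to be careful because the SDE is written on $\hat{S}(t)$ acting from the right on observables while Girsanov's theorem acts on the scalar driving noise. A clean way to handle this is to regard $\hat{S}(t)[X]$ as a single matrix-valued Ito process satisfying a linear SDE whose drift and diffusion coefficients are themselves matrix-valued functions of $\hat{S}(t)$, at which point the above manipulations become routine.
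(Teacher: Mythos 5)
Your overall route is the same as the paper's: apply Girsanov with the constant market price of risk $\Theta=(\alpha-r)/\sigma$, rewrite $d(D(t)S(t))$ in terms of $d\tilde{B}$, and invoke (\ref{martingalecond1}) to kill the drift. The two refinements you add --- equivalence of measures via Novikov's condition, and the upgrade from local to true martingale using the stated integrability hypothesis --- are genuinely useful; the paper asserts the martingale property directly without either step.

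However, the drift-cancellation step as you have written it does not close. Your own (correct) pre-substitution drift is $D(t)\operatorname{Tr}\bigl(\rho\hat S(t)[\mathcal{L}_0[X]+(\alpha-r)X]\bigr)$, and substituting $dB=d\tilde B-\frac{\alpha-r}{\sigma}\,dt$ into the diffusion term contributes an additional $-(\alpha-r)\operatorname{Tr}\bigl(\rho\hat S(t)[R[X]+[X]R^\dagger]\bigr)$. The total drift under $\tilde{\mathbb{P}}$ is therefore
$$D(t)\operatorname{Tr}\left(\rho\hat S(t)\left[\mathcal{L}_0[X]+(\alpha-r)X-(\alpha-r)\left(R[X]+[X]R^\dagger\right)\right]\right),$$
whereas condition (\ref{martingalecond1}) annihilates $\mathcal{L}_0[X]+(\alpha-r)(R[X]+[X]R^\dagger)$: the $(\alpha-r)X$ term survives and the sign of the $R$-term is opposite. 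A constant Girsanov drift $(\alpha-r)/\sigma$ removes the excess return $(\alpha-r)S(t)\,dt$ only if the diffusion coefficient is proportional to $S(t)$ itself, i.e. only if $\operatorname{Tr}(\rho\hat S(t)[R[X]+[X]R^\dagger])=S(t)$, which is not guaranteed here. To be fair, the same mismatch is present in the paper's own proof (its displayed SDE under $\tilde{\mathbb{P}}$ drops the $(\alpha-r)X$ term and carries the Girsanov correction with the opposite sign), so you have faithfully reproduced the published argument --- but your more explicit bookkeeping exposes that either the martingale condition or the choice of $\Theta$ must be adjusted before the assertion ``the drift vanishes identically'' is actually justified.
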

\begin{proof}
First, we use Girsanov's theorem to shift the drift of $D(t)S(t)$ by going over to the Brownian process under $\tilde{\mathbb{P}}$,
$$d\tilde{B}(t)=dB(t)+\frac{(\alpha-r)}{\sigma}dt$$
Under the new measure $\tilde{\mathbb{P}}$, $D(t)S(t)$ is of the form,
$$dD(t)S(t)=D(t)\operatorname{Tr}\left[\rho \hat{S}(t)\left(\mathcal{L}_0[.]dt+(\alpha -r) \left(R[.]+[.]R^\dagger\right)dt+ \sigma\left(R[.]+[.]R^\dagger\right) d\tilde{B}_t\right)X\right]$$
Under the condition of the theorem, the $dt$ term disappears and $D(t)S(t)$ is a martingale under $\mathbb{\tilde{P}}$.

\end{proof}
By the theorem, proven above our market model is therefore a complete market.

In condensed matter, the thermodynamic limit is often of interest. In this case, this corresponds to the limit $T\to \infty$. Something interesting happens in this case. Indeed for any finite $T<\infty$, we see that the boundaries have to correspond with the left and right zero-eigenvector of $\mathcal{L}_0$. In the case, the condition derived in previous theorem reduces to,
$$\left(R[X]+[X]R^\dagger \right)=0$$
Hence the boundary is an eigenvector of both $R[.]+[.]R^\dagger$ and $\mathcal{L}_0$. This means that the bound dimension of $R$ reduces to $1$, and $Z_t=1$. So if we consider some option that has a payoff at a finite time, then we get the usual stock again.
$$\lim_{T\to \infty}\tilde{S}(t)=S(t)$$
with,
$$dS(t)=\alpha dt + \sigma dB(t)$$

We do not continue further on this case.
\subsubsection{Case 2: Change of Measure}
The model here considers the new measure $\tilde{\mathbb{P}}$ generated by the Radon-Nikodym derivative $Z_T=\operatorname{Tr}\left(\rho \hat{Z}_T X\right)$,
\begin{equation}
\label{Radon Mod2}
 d\hat{Z}(t)=\hat{Z}(t)\left(\mathcal{L}_0[.]dt +\left(R[.]+[.]R^\dagger\right) dB_t + m dB_t\right)
\end{equation}
And the stocks $S(t)$ satisfy the usual evolution,
$$dS(t)=\alpha dt + \sigma dB(t)$$
The increments $\Delta B_n$ are now possibly correlated under this measure.
First, let us make sure that  the discounted stock $D(t)S(t)$ is a martingale again.
Assume further on that $Z(T)>0$.
\begin{theorem}
For, 
\begin{equation}
\label{martingalecond}
\mathcal{L}_0[X]+(\alpha-r+\sigma m)X+  \sigma\left(R[X]+[X]R^\dagger \right)=0
\end{equation}
with $\mathcal{L}_0[.]$ of the form (\ref{Lindblad}).
The measure $\tilde{\mathbb{P}}$ defined under the Radon-Nikodym derivative (\ref{Radon Mod2}), then,
 $$d(D(t)S(t)Z_t)=\sigma D(t)S(t)\mathcal{Z}_t dB_t$$
 with,
 $$\mathcal{Z}(t)=\operatorname{Tr}\left(\rho \hat{Z}(t)  \left(R[X]+[X]R^\dagger+(m+\sigma)X\right)\right)$$
 If,
 $$E(S(t)^2\mathcal{Z}(t)^2)<\infty$$

then $\tilde{\mathbb{P}}$ is a risk neutral measure
\end{theorem}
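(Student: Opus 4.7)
The plan has two parts: first verify the identity $d(D(t)S(t)Z(t)) = \sigma D(t)S(t)\mathcal{Z}(t)\,dB_t$ by a direct It\^o calculation, and then use the standard Bayes--Girsanov relation $E^{\tilde{\mathbb{P}}}[\,\cdot\,|\mathcal{F}_s] = \frac{1}{Z_s}E^{\mathbb{P}}[Z_t\,\cdot\,|\mathcal{F}_s]$ to convert the $\mathbb{P}$-martingale property of $D(t)S(t)Z(t)$ into the $\tilde{\mathbb{P}}$-martingale property of the discounted stock, which (together with the assumed strict positivity of $Z(T)$, ensuring equivalence) is the definition of a risk-neutral measure.

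For the identity, I first unfold the superoperator SDE (\ref{Radon Mod2}) evaluated at $X$ and traced against $\rho$, so that the scalar process $Z(t)=\operatorname{Tr}(\rho\,\hat{Z}(t)[X])$ satisfies
\[
dZ(t) = \operatorname{Tr}\bigl(\rho\,\hat{Z}(t)[\mathcal{L}_0[X]]\bigr)\,dt + \operatorname{Tr}\bigl(\rho\,\hat{Z}(t)[R[X]+[X]R^\dagger + mX]\bigr)\,dB_t .
\]
Similarly, $d(D(t)S(t)) = D(t)S(t)\bigl((\alpha-r)\,dt + \sigma\,dB_t\bigr)$ (reading the stock equation in its geometric form). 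Then I apply It\^o's product rule to $D(t)S(t)\cdot Z(t)$, paying attention to the quadratic covariation term $d\langle DS, Z\rangle_t = \sigma D(t)S(t)\cdot \operatorname{Tr}(\rho\,\hat{Z}(t)[R[X]+[X]R^\dagger + mX])\,dt$, since both factors are driven by the same Brownian motion. Collecting the $dt$-terms inside the trace produces exactly
\[
\operatorname{Tr}\Bigl(\rho\,\hat{Z}(t)\bigl[\mathcal{L}_0[X] + (\alpha-r+\sigma m)X + \sigma(R[X]+[X]R^\dagger)\bigr]\Bigr),
\]
which vanishes identically by the hypothesis (\ref{martingalecond}). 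What survives is the $dB_t$ coefficient, which after factoring out $\sigma D(t)S(t)$ is precisely $\mathcal{Z}(t)$ as defined in the statement.

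Because the drift vanishes, $D(t)S(t)Z(t)$ is a $\mathbb{P}$-local martingale. The integrability assumption $E(S(t)^2\mathcal{Z}(t)^2)<\infty$ controls the $L^2$-norm of the integrand in the stochastic integral $\int_0^t \sigma D(s)S(s)\mathcal{Z}(s)\,dB_s$ (using that $D$ is deterministic and bounded), upgrading the local martingale to a genuine $\mathbb{P}$-martingale. By the Bayes rule for conditional expectations under a change of measure, the $\mathbb{P}$-martingale property of $D(t)S(t)Z(t)$ is equivalent to the $\tilde{\mathbb{P}}$-martingale property of $D(t)S(t)$. Combined with the equivalence $\mathbb{P}\sim\tilde{\mathbb{P}}$ coming from the standing assumption $Z(T)>0$, this yields that $\tilde{\mathbb{P}}$ is risk-neutral.

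The step I expect to require the most care is the It\^o computation itself: because $\hat{Z}(t)$ is a superoperator-valued process, one has to be disciplined about where the argument $X$ sits and about the fact that $R[\cdot]+[\cdot]R^\dagger$ acts to the \emph{left} of $\hat{Z}(t)$ in the SDE. The algebra is routine once one commits to a notation, but it is the only place where the specific structure of $\mathcal{L}_0$ from (\ref{Lindblad}) interacts with the Brownian coefficient, and it is precisely this interaction that makes the hypothesis (\ref{martingalecond}) the exact condition for the drift cancellation rather than a sufficient but loose one.
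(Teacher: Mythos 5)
Your proposal is correct and follows essentially the same route as the paper: the paper's own (very terse) proof likewise computes $d(D(t)S(t)Z(t))$ by It\^o calculus to get $E(Z_T D(t)S(t)|\mathcal{F}_s)=Z(s)D(s)S(s)$ and then invokes the Bayes rule for conditional expectations under the change of measure to conclude that $D(t)S(t)$ is a $\tilde{\mathbb{P}}$-martingale. You simply supply the details (the product rule, the quadratic covariation term, the drift cancellation via (\ref{martingalecond}), and the $L^2$ integrability upgrading the local martingale) that the paper leaves implicit.
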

\begin{proof}
The computation is similar as the one above.
Using ito calculus, we can see that,
$$E(Z_T D(t)S(t)|\mathcal{F}_s)=Z(s)D(s)S(s)$$
where $Z(s)=E(Z_T |\mathcal{F}_s)$ is the Radon-Nikodym process.
From this we indeed see that $D(s)S(s)$ is a martingale under $\tilde{\mathbb{P}}$.
\end{proof}

Additionally, we need to show that for this model every derivative security can be hedged. Indeed, in the introduction the martingale representation theorem was used. However, $B(t)$ is not a Brownian motion anymore and not even a martingale under $\tilde{\mathbb{P}}$. Yet, we need to define a method.
Notice first the following,
\begin{Lemma}
If $M_t$ is a martingale under $\tilde{\mathbb{P}}$, then $Z_t M_t$ is a martingale under $\mathbb{P}$.
\end{Lemma}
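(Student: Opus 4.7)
The plan is to invoke the abstract Bayes rule for conditional expectations under a change of measure. The Radon-Nikodym derivative process is defined as $Z(t) = \mathbb{E}[Z \mid \mathcal{F}(t)]$, with $Z = d\tilde{\mathbb{P}}/d\mathbb{P}$, so it is by construction a strictly positive $\mathbb{P}$-martingale with $\mathbb{E}Z(t) = 1$. The standard identity I will use is that for any $\mathcal{F}(t)$-measurable, $\tilde{\mathbb{P}}$-integrable random variable $Y$ and any $s \leq t$,
\[
\tilde{\mathbb{E}}[Y \mid \mathcal{F}(s)] \;=\; \frac{1}{Z(s)}\,\mathbb{E}[Z(t)\,Y \mid \mathcal{F}(s)].
\]
This identity itself is proved by checking the defining relation $\int_A \tilde{\mathbb{E}}[Y \mid \mathcal{F}(s)]\,d\tilde{\mathbb{P}} = \int_A Y\,d\tilde{\mathbb{P}}$ for $A \in \mathcal{F}(s)$, rewriting both sides against $\mathbb{P}$ via $d\tilde{\mathbb{P}} = Z(t)\,d\mathbb{P}$ on $\mathcal{F}(t)$ and $d\tilde{\mathbb{P}} = Z(s)\,d\mathbb{P}$ on $\mathcal{F}(s)$, and using the tower property; I would quote this rather than re-derive it.

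Given this, the proof of the lemma is essentially a one-line manipulation. For $s \leq t$, apply the identity above with $Y = M_t$: since $M_t$ is a $\tilde{\mathbb{P}}$-martingale, $\tilde{\mathbb{E}}[M_t \mid \mathcal{F}(s)] = M_s$, and therefore
\[
M_s \;=\; \frac{\mathbb{E}[Z(t) M_t \mid \mathcal{F}(s)]}{Z(s)},
\]
which after multiplying by $Z(s)$ (which is $\mathcal{F}(s)$-measurable and may be pulled inside the conditional expectation) gives $\mathbb{E}[Z(t) M_t \mid \mathcal{F}(s)] = Z(s) M_s$. This is precisely the $\mathbb{P}$-martingale property of $Z_t M_t$ with respect to the filtration $\mathcal{F}(t)$.

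The only other thing to verify is $\mathbb{P}$-integrability of $Z(t) M_t$, which follows immediately from $\mathbb{E}|Z(t) M_t| = \tilde{\mathbb{E}}|M_t| < \infty$, the latter being part of the hypothesis that $M_t$ is a $\tilde{\mathbb{P}}$-martingale. Adaptedness of $Z(t) M_t$ to $\mathcal{F}(t)$ is clear since both factors are adapted. No obstacle is really expected here; the content of the lemma is essentially the change-of-measure identity, and the main decision is whether to quote the abstract Bayes rule or to prove it inline via the defining property of conditional expectation applied to sets in $\mathcal{F}(s)$ and the relation $d\tilde{\mathbb{P}}\big|_{\mathcal{F}(u)} = Z(u)\,d\mathbb{P}\big|_{\mathcal{F}(u)}$.
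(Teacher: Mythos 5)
Your proof is correct and follows essentially the same route as the paper's: both rest on the abstract Bayes identity $\tilde{\mathbb{E}}[M_t\mid\mathcal{F}(s)]=Z(s)^{-1}\mathbb{E}[Z(t)M_t\mid\mathcal{F}(s)]$ applied to the $\tilde{\mathbb{P}}$-martingale property of $M$. Your version is slightly more careful in also checking integrability and adaptedness, but the core argument is identical.
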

\begin{proof}
Since $M(t)$ is $\mathcal{F}_t$-measurable then,
$$M_s=\tilde{E}(M_t|\mathcal{F}_s)=\frac{1}{Z_s}E(Z_t M_t |\mathcal{F}_s)$$
from which the claim follows.
\end{proof}
We can then show using the theorem that for the portofolio process $X(t)$,
$$d(Z(t)S(t)X(t))=\sigma \Delta(t)S(t)\mathcal{Z}(t) X(t) dB_t$$

Defining the martingale process for the derivative security, $D(t)V(t)=\tilde{E}(D(T)V(T)|\mathcal{F}_s)$, the process $Z(t)D(t)V(t)$ is a martingale under $\mathbb{P}$. The martingale representation theorem can be used again,
$$Z(T)D(T)V(T)=V(0)+\int_0^t \Gamma(s) dB_s$$
And we can set $\Delta(t)=\frac{\Gamma_t}{\sigma S(t)\mathcal{Z}_t}$.
 Since the portofolio can only replicated by a unique strategy, this market model is complete.

The thermodynamic limit $T\to \infty$ can be discussed again. Indeed, again, we need the additional condition $\mathcal{L}_0[X]=0$. the Radon-Nikodym process then reduces to,
$$\lim_{T\to \infty}Z(t)=\exp\left(-\frac{\alpha -r }{\sigma}B_t-\frac{1}{2}\left(\frac{\alpha -r }{\sigma}\right)^2t\right)$$
We summarize this in the corollary,
\begin{corollary}
In the limit $T\to \infty$,  a complete market model will always have in time uncorrelated increments of the logarithm of the stock.
\end{corollary}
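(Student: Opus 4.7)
The plan is to mirror the thermodynamic-limit argument already given at the end of Case~1 and combine it with the martingale condition (\ref{martingalecond}) derived in Case~2. First I would observe that for the Radon-Nikodym process $\hat{Z}(t)$ in (\ref{Radon Mod2}) to be well-defined as $T\to\infty$, the boundary operator $X$ must lie in the kernel of $\mathcal{L}_0$ (otherwise the exponential $\exp(T\mathcal{L}_0)$ acting on $X$ either blows up or decays to zero, spoiling the normalization $E(Z_T)=1$); this is exactly the same observation that was invoked at the end of Case~1.

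Next, I would substitute $\mathcal{L}_0[X]=0$ into the martingale condition
$$\mathcal{L}_0[X]+(\alpha-r+\sigma m)X+\sigma\bigl(R[X]+[X]R^\dagger\bigr)=0,$$
which reduces to
$$R[X]+[X]R^\dagger=-\frac{\alpha-r+\sigma m}{\sigma}\,X.$$
Thus $X$ is a simultaneous eigenvector of both $\mathcal{L}_0$ and the superoperator $R[\cdot]+[\cdot]R^\dagger$. Because the csMPS dynamics in (\ref{Radon Mod2}) only probes $R$ through its action on $X$ (via the transfer operator generated from the boundary), this joint eigenvector property collapses the effective virtual bond dimension of the Radon-Nikodym process to one, exactly as in the Case~1 argument.

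Once the bond dimension has collapsed, the operator-valued noise term $R[\cdot]+[\cdot]R^\dagger$ acts as a scalar multiple of the identity on the effective one-dimensional virtual subspace, so (\ref{Radon Mod2}) becomes a scalar SDE driven by $dB_t$ with constant coefficients. Solving this scalar SDE and fixing the drift by the martingale condition above yields precisely
$$\lim_{T\to\infty}Z(t)=\exp\!\left(-\frac{\alpha-r}{\sigma}B_t-\frac{1}{2}\Bigl(\frac{\alpha-r}{\sigma}\Bigr)^{\!2}t\right),$$
which is the standard Girsanov Radon-Nikodym density. Under the measure $\tilde{\mathbb{P}}$ it generates, $\tilde{B}(t)=B(t)+\frac{\alpha-r}{\sigma}t$ is a genuine Brownian motion, so the log-stock increments $d\log S(t)=(\alpha-\tfrac12\sigma^2)\,dt+\sigma\,d\tilde B(t)-\frac{\alpha-r}{\sigma}\sigma\,dt$ are independent, hence uncorrelated in time.

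The main obstacle is the step that promotes "$X$ is a simultaneous eigenvector of $\mathcal{L}_0$ and $R[\cdot]+[\cdot]R^\dagger$" to "the effective bond dimension is one"; the paper asserts this in the Case~1 discussion but does not spell out a rigorous reduction. A careful argument would restrict the transfer operator of the csMPS to the cyclic subspace generated from $X$, and verify that on this subspace $R[\cdot]+[\cdot]R^\dagger$ is scalar, so that no non-trivial correlation between successive Brownian increments can be generated. Everything else is bookkeeping with Ito's formula and Girsanov's theorem.
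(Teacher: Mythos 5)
Your proposal follows the paper's own route exactly: impose $\mathcal{L}_0[X]=0$ in the thermodynamic limit, reduce the martingale condition (\ref{martingalecond}) so that $X$ becomes a joint eigenvector of $\mathcal{L}_0$ and $R[\cdot]+[\cdot]R^\dagger$, collapse the bond dimension to one, and recover the classical Girsanov density $\exp\bigl(-\tfrac{\alpha-r}{\sigma}B_t-\tfrac12(\tfrac{\alpha-r}{\sigma})^2t\bigr)$, whence the log-stock increments are uncorrelated. The gap you flag --- promoting the joint-eigenvector property to a genuine reduction of the bond dimension --- is equally present in the paper, which asserts this step in the Case~1 discussion without further justification.
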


\section{Conclusion}
In this paper, we discuss how Stochastic Matrix Product states can be used as a representation of the Radon-Nikodym derivative of processes defined on a filtration of another process. With this representation, we derive a simple equation for the time-evolution of the marginal distribution of the process. We showed how classical non-Markovian, classical and quantum processes can be embedded in a quantum Markovian dynamics. The properties of non-Markovian processes, such as ergodicity and mixing are therefore determined by the quantum dynamics.

\end{document}